\begin{document}
%
% paper title
% can use linebreaks \\ within to get better formatting as desired
\title{Coverage Analysis of Heterogeneous Cellular Networks in Urban Areas}
% author names and affiliations
% use a multiple column layout for up to three different
% affiliations
%\author{\IEEEauthorblockN{Xiaohu Ge$^1$, Jorge Mart\'{i}nez-Bauset$^2$, Vicente Casares-Giner$^2$, Bin Yang$^1$, Junliang Ye$^1$ and Min Chen$^3$}
%\IEEEauthorblockA{$^1$Dept. Electronics \& Information Engineering\\Huazhong University of Science \& Technology\\
%Wuhan, China\\
%xhge@mail.hust.edu.cn}
%\and
%\IEEEauthorblockN{Homer Simpson}
%\IEEEauthorblockA{Twentieth Century Fox\\
%Springfield, USA\\
%Email: homer@thesimpsons.com}
%\and
%\IEEEauthorblockN{James Kirk\\ and Montgomery Scott}
%\IEEEauthorblockA{Starfleet Academy\\
%San Francisco, California 96678-2391\\
%Telephone: (800) 555--1212\\
%Fax: (888) 555--1212}}

% conference papers do not typically use \thanks and this command
% is locked out in conference mode. If really needed, such as for
% the acknowledgment of grants, issue a \IEEEoverridecommandlockouts
% after \documentclass

% for over three affiliations, or if they all won't fit within the width
% of the page, use this alternative format:
%
\author{\IEEEauthorblockN{Bin Yang\IEEEauthorrefmark{1},
Guoqiang Mao\IEEEauthorrefmark{2}\IEEEauthorrefmark{5}\IEEEauthorrefmark{1}\IEEEauthorrefmark{4},
Xiaohu Ge\IEEEauthorrefmark{1},
Hsiao-Hwa Chen\IEEEauthorrefmark{3},
Tao Han\IEEEauthorrefmark{1},
Xuefei Zhang\IEEEauthorrefmark{4}
}
\IEEEauthorblockA{\IEEEauthorrefmark{1} School of Electronic Information \& Communications,
Huazhong University of Science \& Technology, Wuhan, China
}
\IEEEauthorblockA{\IEEEauthorrefmark{2} School of Computing and Communication,
University of Technology Sydney, Australia}
\IEEEauthorblockA{\IEEEauthorrefmark{5} Data61 Australia}
\IEEEauthorblockA{\IEEEauthorrefmark{3} Department of Engineering Science, National Cheng Kung University, Taiwan}
\IEEEauthorblockA{\IEEEauthorrefmark{4} School of Information and Communication Engineering, Beijing University of Posts and Telecommunications, Beijing, China\\
Corresponding Author: Xiaohu Ge, Email: xhge@mail.hust.edu.cn}
}

% use for special paper notices
%\IEEEspecialpapernotice{(Invited Paper)}

% make the title area
\maketitle

\begin{abstract}
%\boldmath
In this article, a network model incorporating both line-of-sight (LOS) and non-line-of-sight (NLOS) transmissions is proposed to investigate impacts of blockages in urban areas on heterogeneous network coverage performance. Results show that co-existence of NLOS and LOS transmissions has a significant impact on network performance. We find in urban areas, that deploying more BSs in different tiers is better than merely deploying all BSs in the same tier in terms of coverage probability.
\end{abstract}
\vspace{2 ex}

\begin{keywords}
Coverage probability; heterogeneous networks; NLOS; blockage
\end{keywords}
% IEEEtran.cls defaults to using nonbold math in the Abstract.
% This preserves the distinction between vectors and scalars. However,
% if the conference you are submitting to favors bold math in the abstract,
% then you can use LaTeX's standard command \boldmath at the very start
% of the abstract to achieve this. Many IEEE journals/conferences frown on
% math in the abstract anyway.

% no keywords

% For peer review papers, you can put extra information on the cover
% page as needed:
% \ifCLASSOPTIONpeerreview
% \begin{center} \bfseries EDICS Category: 3-BBND \end{center}
% \fi
%
% For peerreview papers, this IEEEtran command inserts a page break and
% creates the second title. It will be ignored for other modes.
%\IEEEpeerreviewmaketitle

\section{Introduction}
% no \IEEEPARstart
With the tremendous growth of mobile subscribers and mobile connected devices, mobile broadband traffic has exhibited unprecedented growth. According to the forecast by Cisco \cite{Cisco15}, there will be 11.5 billion mobile-connected devices by 2019, which suggests that globally mobile data traffic is expected to grow to 24.3 exabytes (EB) per month by 2019 -- nearly a tenfold increase over that in 2014. The ever increasing mobile data traffic propels us to seek new techniques to handle the challenge. Deploying complementary small-cell networks like femtocells and picocells in the place where the need for traffic is "hot" emerges as a promising solution in the 5G wireless networks \cite{Ge5G14, Ge5G16, YangA15,Ge15}.
% \cite{Qualcomm11}

In urban areas, radio propagation is more complicated than rural areas due to high-rise buildings, trees, etc. Through reflection, diffraction, and even blockage, buildings not only attenuate the received signal power but also weaken the undesired signal power, i.e., the interference. Consequently it often occurs that the strongest signal does not come from the geographically nearest BS which renders location-based cell association scheme \cite{Mukherjee12Distribution} ineffective.

In \cite{Choi14On}, Choi analyzed blockage effects of a millimeter-wave cellular system. Different from traditional cellular signals, a mobile user (MU) can only communicate with BSs with line-of-sight (LOS) connection in millimeter wave systems. Therefore, outage probability obtained assuming millimeter wave communication will be higher than that expected in real cellular communications. Bai \textit{et al.} \cite{Bai14Analysis} analyzed the performance impact of large-scale blockage effects, which not only capture 2D shape of buildings, but also the height of buildings, in a microcell network. In their work a MU can only connect to its nearest LOS BS, which of course does not reflect the reality when a MU is in a central business district (CBD) with high-rise buildings or is located in an office building.

In this paper, we develop a model to analyze the coverage performance of heterogeneous networks incorporating both LOS and NLOS connections, which is considered typical for urban environment. Different from related work in which a BS is connected to the geographically nearest BS, which does not reflect the reality in urban environment, in our work, we consider that a MU is associated with a BS that delivers the strongest received signal-to-interference-plus-noise ratio (SINR) where both LOS and NLOS connections are considered in the analysis. The main contributions of this paper are summarized as follows.
\begin{enumerate}
\item The coverage performance in a heterogeneous network is analyzed considering both NLOS and LOS transmissions and that a MU is associated with the BS that delivers the strongest SINR.
\item An analytical expression for the coverage probability is derived. This is distinct from previous work where the theoretical analysis is conducted assuming that a MU is associated with its nearest LOS BS for analytical tractability.
\item Through results, We find in urban areas, that deploying more BSs in different tiers is better than merely deploying all BSs in the same tier in terms of coverage probability.
\end{enumerate}

The remainder of this paper is organized as follows. Section~\ref{sec2} describes the system model. While in Section~\ref{sec3}, the per tier coverage probability and the coverage probability of the entire network are derived. Results and performance analysis are given in Section~\ref{sec4}. Finally, Section~\ref{sec5} concludes this paper.

\section{System Model}
\label{sec2}
We consider a $K$-tier heterogeneous cellular network which
consists of macrocells, picocells, femtocells, etc and focus on the analysis in downlink coverage performance. The BSs of each tier are assumed to be spatially distributed on an infinite plane following
independent homogeneous Poisson point processes (PPPs) denoted by $\Phi_{k}$, $k\in\left\{ 1,2,\cdots,K\right\} $ with intensities $\lambda_{k}$, $k\in\left\{ 1,2,\cdots,K\right\} $.
MUs are located according to a homogeneous PPP denoted by $\Phi_{u}$ with intensity $\lambda_{u}$. BSs of the same tier transmit using the same power $P_{t}^{\left(k\right)}$ and share the same bandwidth $B_{k}$. BSs belonging to different tiers use different power
and orthogonal bandwidth for transmission. Therefore there is no inter-tier interference. Furthermore, within a cell, MUs use different frequency bandwidth for downlink and uplink transmission and therefore there is no intra-cell interference for downlink transmission analysis in our paper. However, BSs of the same tier may interfere each other and generate the inter-cell interference which is the main focus of this paper. The assumption that both BSs and MUs are homogeneously distributed over space makes our analysis tractable with a minor loss of accuracy \cite{Andrew11A}.

\subsection{Cell Association Scheme}
\label{2-A}
Taking both NLOS and LOS transmission into consideration, the nearest BS of a tier may not be the best candidate BS in that tier to associate with. More specifically, the cell association decision can be divided into the following two major steps.
\begin{enumerate}
\item If a MU requests to connect to a BS, it will firstly choose $n$ nearest BSs from each tier to form the set of candidate BSs. The candidate BS set is denoted by $\Omega_{B}=\left\{ \left(k,i\right)|k\in\left\{ 1,2,\cdots,K\right\} ,i\in\left[1,n\right]\right\} $, where $\left(k,i\right)$ represents the $i$-th nearest BS in the $k$-th tier. For example, $\left(1,3\right)$ means the 3rd nearest BS in the 1st tier. The total number of candidate BSs in $\Omega_{B}$ is $nK$.
\item The MU will then choose which BS to associate with from $\Omega_{B}$ based on its received SINR. The BS in $\Omega_B$ that offers the highest received SINR to the MU will be chosen.
\end{enumerate}

Apparently, when the value of $n$ is chosen to be sufficiently large, the coverage performance of the cell association scheme described above will resemble that of the cell association scheme that chooses the BS with the highest SINR at the MU from all tiers.

\subsection{Signal Propagation Model}
\label{2-B}
%Different from the previous signal propagation model in {[}ICC15{]},

We consider both NLOS and LOS transmissions. The occurrence of NLOS or LOS transmissions depend on various environmental factors, including geographical structure, distance and clusters. N. Blaunstein \cite{Blaunstein98Parametric} gave the probability for NLOS transmissions
\begin{equation}
p_{N}\left(R\right)=1-e^{-\kappa R}.\label{eq:p_NLOS}
\end{equation}
And the probability for LOS transmissions is given by
\begin{equation}
p_{L}\left(R\right)=1-p_{N}\left(R\right)=e^{-\kappa R},\label{eq:p_LOS}
\end{equation}
where $\kappa$ is a parameter determined by the intensity and the mean length of the blockage lying in the visual path between a MU and a BS, and $R$ denotes the distance between the MU and the BS. Bai \cite{Bai14Analysis} extended N. Blaunstein's work by using random shape theory which considers that $\kappa$ is not only determined by the mean length but also the mean width.

In general, NLOS and LOS transmissions will experience different path losses, which is captured by the following model
\[
PL|NLOS\left[\textrm{dB}\right]=A_{N}^{\left(k\right)}+\alpha_{N}^{\left(k\right)}10\lg R+\xi_{N}^{\left(k\right)},
\]
\[
PL|LOS\left[\textrm{dB}\right]=A_{L}^{\left(k\right)}+\alpha_{L}^{\left(k\right)}10\lg R+\xi_{L}^{\left(k\right)},
\]
where $|LOS (NLOS)$ means the LOS (NLOS) transmission, $A_{N}^{\left(k\right)}$
and $A_{L}^{\left(k\right)}$ are constants determined by the transmission frequency, $\alpha_{N}^{\left(k\right)}$ and $\alpha_{L}^{\left(k\right)}$
are respective path loss exponents, $\xi_{N}^{\left(k\right)}$ and $\xi_{L}^{\left(k\right)}$
are independent Gaussian random variables with zero means, i.e., $\xi_{N}^{\left(k\right)}\sim\mathcal{N}\left(0,\left(\sigma_{N}^{\left(k\right)}\right)^{2}\right)$
and $\xi_{L}^{\left(k\right)}\sim\mathcal{N}\left(0,\left(\sigma_{L}^{\left(k\right)}\right)^{2}\right)$, reflecting the attenuation caused by flat fading. The corresponding parameters in this model can be found in \cite{Anderson14Fixed}. Accordingly, the received signal power without and with LOS transmission link in
W (watt) are
\[
P_{i}^{\left(k\right)}|NLOS=B_{N}^{\left(k\right)}\left(R_{i}^{\left(k\right)}\right)^{-\alpha_{N}^{\left(k\right)}}\exp\left(\beta\xi_{N}^{\left(k\right)}\right),
\]
and
\[
P_{i}^{\left(k\right)}|LOS=B_{L}^{\left(k\right)}\left(R_{i}^{\left(k\right)}\right)^{-\alpha_{L}^{\left(k\right)}}\exp\left(\beta\xi_{L}^{\left(k\right)}\right),
\]
respectively, where $R_{i}^{\left(k\right)}$ denotes the distance between a MU and the BS $\left(k,i\right)$,   $B_{N}^{\left(k\right)}=P_{t}^{\left(k\right)}\cdot10^{-{A_{N}^{\left(k\right)}}/{10}}$, $B_{L}^{\left(k\right)}=P_{t}^{\left(k\right)}\cdot10^{-{A_{L}^{\left(k\right)}}/{10}}$ and $\beta=-{\ln10}/{10}$ are constants.

Therefore, the received power of the typical MU located at the origin $o$ from the BS $\left(k,i\right)$ is given by
\begin{align}
P_{i}^{\left(k\right)}\left(R_{i}^{\left(k\right)}\right) & =\mathbb{I}_{i}^{\left(k\right)}P_{i}^{\left(k\right)}|\textrm{NLOS}+\left(1-\mathbb{I}_{i}^{\left(k\right)}\right)P_{i}^{\left(k\right)}|\textrm{LOS},\label{eq:rec_power}
\end{align}
where $\mathbb{I}_{i}^{\left(k\right)}$ is a random indicator variable equal to 1 for NLOS transmission and 0 otherwise, and the corresponding
probabilities are $p_{N}\left(R\right)$ and $p_{L}\left(R\right)$ given by (\ref{eq:p_NLOS}) and (\ref{eq:p_LOS}), respectively.

We consider an interference-limited system and the impact of noise is thus ignored \cite{Andrew11A}. For downlink transmission, the signal-to-interference ratio (SIR) experienced by the typical MU associated with the BS $(k,i)$ is expressed as follows
\begin{align}
\textrm{SIR}_{i}^{\left(k\right)} & =\frac{S}{I}=\frac{P_{i}^{\left(k\right)}}{\underset{(k,j)\in\Omega_{k}^{'}}{\sum}P_{j}^{\left(k\right)}},
\end{align}
where $\Omega_{k}^{'}$ is the Palm point process \cite{Stoyan96Stochastic} representing the set of interferers in the $k$-th tier network.

\section{The Coverage Probability}
\label{sec3}
A MU is considered to be covered if the SIR from any of the BS in the set of candidate BSs, formed by choosing $n$ nearest BSs from each tier, is greater than or equal to a prescribed threshold $\gamma$. In the following derivation, a conditional coverage probability will be firstly obtained with Laplace transform and then the unconditional probability will be derived by integrating with respect to random variables.

\subsection{An Analysis of the Conditional Coverage Probability}
\label{3-A}

\newtheorem{lemma}{Lemma}
\begin{lemma}
  For non-negative set $\Xi=\left\{ a_{q}\right\} $, $q\in\mathbb{N}$, if and only if $a_{m}>a_{n}$, then $\frac{a_{m}}{\underset{q\neq m}{\sum}a_{q}+W}>\frac{a_{n}}{\underset{q\neq n}{\sum}a_{q}+W}$, $\forall a_{m},a_{n}\in\Xi$.
\end {lemma}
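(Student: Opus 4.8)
The plan is to reduce both ratios to a common algebraic form by introducing the total sum of the set. Write $S=\sum_{q}a_{q}$. Then the two denominators can be rewritten as $\sum_{q\neq m}a_{q}+W = S+W-a_{m}$ and $\sum_{q\neq n}a_{q}+W = S+W-a_{n}$, so the asserted equivalence becomes the statement
\[
\frac{a_{m}}{S+W-a_{m}}>\frac{a_{n}}{S+W-a_{n}}\quad\Longleftrightarrow\quad a_{m}>a_{n}.
\]
This rewriting is the one genuinely useful step: it exposes that the two fractions are the same function of $a_{m}$ and $a_{n}$ evaluated at the common constant $C:=S+W$.

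From here I would take the direct cross-multiplication route, since it delivers both directions of the ``if and only if'' simultaneously. Both denominators $C-a_{m}=\sum_{q\neq m}a_{q}+W$ and $C-a_{n}=\sum_{q\neq n}a_{q}+W$ are strictly positive (the $a_{q}$ are non-negative and $W$ is a positive additive term), so clearing them preserves the direction of the inequality. After multiplying out, the cross term $a_{m}a_{n}$ appears on both sides and cancels, collapsing the inequality to
\[
(a_{m}-a_{n})(S+W)>0.
\]
Since $S+W>0$, this is equivalent to $a_{m}>a_{n}$, which proves the lemma in both directions at once. An equivalent and slightly more conceptual framing, if preferred, is to define $f(t)=t/(C-t)$ and observe $f'(t)=C/(C-t)^{2}>0$, so $f$ is strictly increasing on $(0,C)$ and therefore order-preserving; the lemma is then the monotonicity of $f$ applied to $t=a_{m}$ and $t=a_{n}$.

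There is essentially no hard step here; the result is elementary once the denominators are expressed through the common total $S+W$. The only point demanding a line of care is the strict positivity of the two denominators, which is what licenses the cross-multiplication (or equivalently confines $a_{m},a_{n}$ to the interval where $f$ is increasing). I would state this positivity explicitly, noting it follows from the non-negativity of the set $\Xi$ together with the presence of the additive term $W$, and flag the degenerate case where every $a_{q}=0$ (in which no strict inequality $a_{m}>a_{n}$ can hold, so the equivalence is vacuously consistent). Beyond that caveat, the argument is a two-line computation.
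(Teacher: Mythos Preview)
Your proposal is correct and takes essentially the same approach as the paper: both introduce the total $S=\sum_{q}a_{q}$ and rewrite the denominators as $S+W-a_{m}$ and $S+W-a_{n}$. The paper's proof is terser---it simply observes $a_{m}>a_{n}\Leftrightarrow a_{m}/(S+W)>a_{n}/(S+W)$ and then asserts the desired inequality---whereas your cross-multiplication (or equivalently the monotonicity of $t\mapsto t/(C-t)$) makes the ``thus'' step explicit and handles both directions of the biconditional cleanly.
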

\begin{proof}
   For non-negative set $\Xi=\left\{ a_{q}\right\} $, $q\in\mathbb{N}$, if and only if $a_{m}>a_{n}$, $\frac{a_{m}}{\underset{q}{\sum}a_{q}+W}>\frac{a_{n}}{\underset{q}{\sum}a_{q}+W}$, thus $\frac{a_{m}}{\underset{q}{\sum}a_{q}+W-a_{m}}>\frac{a_{n}}{\underset{q}{\sum}a_{q}+W-a_{n}}$, which completes the proof.
\end{proof}
Let $\mathsf{M}_{m}^{\left(k\right)}$ denote the event $\left\{ \arg\underset{i\in\left[1,n\right]}{\max}\textrm{SIR}_{i}^{\left(k\right)}=m\right\} $,
meaning that among the $n$ nearest BSs in the $k$-th tier, the maximum SIR comes from the $m$-th nearest BS in that tier. Accordingly, the typical MU will
connect to BS $\left(k,m\right)$ if we restrict that only BSs in the $k$-th tier are available. Conditioned on event $\mathsf{M}_{m}^{\left(k\right)}$, power received from the $m$-th nearest BS in the $k$-th tier $P_{m}^{\left(k\right)} = t$ and distances $\left\{ R_{i}^{\left(k\right)}=r_{i}^{\left(k\right)}\text{, i\ensuremath{\in\left[1,n\right]} }\right\} $, we obtain the conditional coverage probability that the typical MU is covered by the BS $(k,m)$ as follows
%\begin{align}
% & \quad\Pr\left(\textrm{SIR}_{m}^{\left(k\right)}>T\left|\mathsf{M},\left\{ R_{i}^{\left(k\right)}=r_{i}^{\left(k\right)}\right\} \right.\right)\nonumber \\
% & =F_{I_{m}^{\left(k\right)}}\left(\left.\frac{P_{m}^{\left(k\right)}}{T}\right|\mathsf{M},\left\{ R_{i}^{\left(k\right)}=r_{i}^{\left(k\right)}\right\} \right)\nonumber \\
% & \stackrel{\left(a\right)}{=}F_{I_{m}^{\left(k\right)}}\left(\left.\frac{t}{T}\right|\underset{j\in\left[1,n\right],i\neq m}{\bigcap}P_{j}^{\left(k\right)}\leq t,P_{m}^{\left(k\right)}=t,\left\{ R_{i}^{\left(k\right)}=r_{i}^{\left(k\right)}\right\} \right),
%\end{align}

\begin{align}
 & \quad\Pr\left(\textrm{SIR}_{m}^{\left(k\right)}>\gamma\left|\mathsf{M}_{m}^{\left(k\right)},P_{m}^{\left(k\right)}=t,\left\{ R_{i}^{\left(k\right)}\right\} \right.\right)\nonumber \\
 & =\Pr\left(\left.\frac{P_{m}^{\left(k\right)}}{I_{m}^{\left(k\right)}}>\gamma\right|\mathsf{M}_{m}^{\left(k\right)},P_{m}^{\left(k\right)}=t,\left\{ R_{i}^{\left(k\right)}\right\} \right)\nonumber \\
 & =F_{I_{m}^{\left(k\right)}}\left(\left.\frac{P_{m}^{\left(k\right)}}{\gamma}\right|\mathsf{M}_{m}^{\left(k\right)},P_{m}^{\left(k\right)}=t,\left\{ R_{i}^{\left(k\right)}\right\} \right)\nonumber \\
 & \stackrel{\left(a\right)}{=}F_{I_{m}^{\left(k\right)}}\left(\left.\frac{t}{\gamma}\right|\underset{j\in\left[1,n\right],i\neq m}{\bigcap}P_{j}^{\left(k\right)}\leq t,P_{m}^{\left(k\right)}=t,\left\{ R_{i}^{\left(k\right)}=r_{i}^{\left(k\right)}\right\} \right),
\end{align}
with
\begin{equation}
I_{m}^{\left(k\right)}=I_{m,\leq n}^{\left(k\right)}+I_{m,>n}^{\left(k\right)}=\sum_{j=1,j\neq m}^{n}P_{j}^{\left(k\right)}+\sum_{j=n+1}^{\infty}P_{j}^{\left(k\right)},
\end{equation}
where the last step $\left(a\right)$ is obtained followed by Lemma 1 that event $\left\{ \arg\underset{i\in\left[1,n\right]}{\max}\textrm{SIR}_{i}^{\left(k\right)}=m\right\} $
is equivalent to event $\left\{\arg\underset{i\in\left[1,n\right]}{\max}P_{i}^{\left(k\right)}=m\right\}$,
$\gamma$ is the SIR threshold, $I_{m}^{\left(k\right)}$ is the aggregate
interference power experienced by the typical MU connecting to the BS $\left(k,m\right)$,
$I_{m,\leq n}^{\left(k\right)}$ is the aggregate interference signal
power from the $n-1$ candidate BSs excluding the m-th nearest BS in the $k$-th tier, $I_{m,>n}^{\left(k\right)}$
is the aggregate interference power from BSs located outside the disk area with radius $R_{n}^{\left(k\right)}$ in the $k$-th
tier, e.g. the $n+1$-th nearest BS, the $n+2$-th nearest BS, ... and so on, and $F_{I_{m}^{\left(k\right)}}\left(x\left|\mathsf{M},\left\{ R_{i}^{\left(k\right)}=r_{i}^{\left(k\right)}\right\} \right.\right)$
is the conditional cumulative distribution function (CDF) of the aggregate
interference $I_{m}^{\left(k\right)}$.

To obtain the CDF of $I_{m}^{\left(k\right)}$, we firstly derive
the Laplace transform (LT) of $I_{m,\leq n}^{\left(k\right)}$ and
$I_{m,>n}^{\left(k\right)}$, respectively.

\subsubsection{The LT of $I_{m,\leq n}^{\left(k\right)}$}

Conditioned on $\left\{ R_{i}^{\left(k\right)}=r_{i}^{\left(k\right)}\text{, i\ensuremath{\in\left[1,n\right]} }\right\} $, the distribution of received power $P_{j}^{\left(k\right)}$,
$j\neq m$, is derived as follows
\begin{align}
 & \quad F_{P_{j}^{\left(k\right)}}\left(x\left|R_{j}^{\left(k\right)}=r_{j}^{\left(k\right)}\right.\right)=\Pr\left(\left.P_{j}^{\left(k\right)}\leq x\right|R_{j}^{\left(k\right)}=r_{j}^{\left(k\right)}\right)\nonumber \\
 & \stackrel{\left(a\right)}{=}\Pr\left(\left.P_{j}^{\left(k\right)}\leq x\right|NLOS,R_{j}^{\left(k\right)}=r_{j}^{\left(k\right)}\right)\cdot p_{N}\left(r_{j}^{\left(k\right)}\right)\nonumber \\
 & \quad+\Pr\left(\left.P_{j}^{\left(k\right)}\leq x\right|LOS,R_{j}^{\left(k\right)}=r_{j}^{\left(k\right)}\right)\cdot p_{L}\left(r_{j}^{\left(k\right)}\right)\nonumber \\
 & =\left(1-e^{-\kappa r_{j}^{\left(k\right)}}\right)\int_{0}^{x}f_{P_{j}^{\left(k\right)}}\left(z|NLOS,R_{j}^{\left(k\right)}=r_{j}^{\left(k\right)}\right)\textrm{d}z+\nonumber \\
 & \quad e^{-\kappa r_{j}^{\left(k\right)}}\int_{0}^{x}f_{P_{j}^{\left(k\right)}}\left(z|LOS,R_{j}^{\left(k\right)}=r_{j}^{\left(k\right)}\right)\textrm{d}z,\label{eq:F_P_1}
\end{align}
where $\left(a\right)$ follows from the law of total probability,  $f_{P_{j}^{\left(k\right)}}\left(z|NLOS,R_{j}^{\left(k\right)}=r_{j}^{\left(k\right)}\right)$
and $f_{P_{j}^{\left(k\right)}}\left(z|LOS,R_{j}^{\left(k\right)}=r_{j}^{\left(k\right)}\right)$
are the PDF of received signal power $P_{j}^{\left(k\right)}$ conditioned
on NLOS and LOS transmissions, respectively.

The PDF of received signal power $P_{j}^{\left(k\right)}$ conditioning
on NLOS transmission and the distance $R_{j}^{\left(k\right)}=r_{j}^{\left(k\right)}$, i.e., $f_{P_{j}^{\left(k\right)}}\left(z|NLOS,R_{j}^{\left(k\right)}=r_{j}^{\left(k\right)}\right)$
is given by
\begin{align}
 & \quad f_{P_{j}^{\left(k\right)}}\left(z|NLOS,R_{j}^{\left(k\right)}=r_{j}^{\left(k\right)}\right)\nonumber \\
 & \stackrel{\left(a\right)}{=}\left|\frac{\textrm{d}}{\textrm{d}z}\left[P_{j}^{\left(k\right)^{*}}\left(x\right)\right]\right|\cdot f_{\xi_{N}^{\left(k\right)}}\left[P_{j}^{\left(k\right)^{*}}\left(z\right)\right]\nonumber \\
 & =\frac{1}{z\sigma_{sN}^{\left(k\right)}\sqrt{2\pi}}\exp\left[-\left(\ln z-\mu_{sN}^{\left(k\right)}\right)^{2}\left/2\left(\sigma_{sN}^{\left(k\right)}\right)^{2}\right.\right],\label{eq:P_NLOS,r}
\end{align}
where $\left(a\right)$ is obtained by applying the change-of-variables
rule on the density function of a normal distribution, $P_{j}^{\left(k\right)^{*}}\left(z\right)$
denotes the inverse function of $P_{i}^{\left(k\right)}|NLOS$, thus $\xi_{N}^{\left(k\right)}=P_{j}^{\left(k\right)^{*}}\left(P_{j}^{\left(k\right)}\right)=\frac{1}{\beta}\left(\ln P_{j}^{\left(k\right)}-\mu_{sN}^{\left(k\right)}\right)$,
$\mu_{sN}^{\left(k\right)}=\ln B_{N}^{\left(k\right)}-\alpha_{N}^{\left(k\right)}\ln r_{j}^{\left(k\right)}$
and $\left(\sigma_{sN}^{\left(k\right)}\right)^{2}=\left(\beta\sigma_{N}^{\left(k\right)}\right)^{2}$.
Obviously, $P_{j}^{\left(k\right)}$ is log-normal distributed conditioned
on NLOS transmission and the distance $R_{j}^{\left(k\right)}=r_{j}^{\left(k\right)}$, i.e., $P_{j}^{\left(k\right)}\sim\ln\mathcal{N}\left(\mu_{sN}^{\left(k\right)},\sigma_{sN}^{\left(k\right)}\right)$.

Similarly, the PDF of received signal power $P_{j}^{\left(k\right)}$
conditioning on LOS transmission and the distance $R_{j}^{\left(k\right)}=r_{j}^{\left(k\right)}$, i.e., $f_{P_{j}^{\left(k\right)}}\left(z|LOS,R_{j}^{\left(k\right)}=r_{j}^{\left(k\right)}\right)$
is given by
\begin{align}
 & \quad f_{P_{j}^{\left(k\right)}}\left(z|LOS,R_{j}^{\left(k\right)}=r_{j}^{\left(k\right)}\right)\nonumber \\
 & =\frac{1}{z\sigma_{sL}^{\left(k\right)}\sqrt{2\pi}}\exp\left[-\left(\ln z-\mu_{sL}^{\left(k\right)}\right)^{2}\left/2\left(\sigma_{sL}^{\left(k\right)}\right)^{2}\right.\right],\label{eq:f_P_LOS,r}
\end{align}
where $\mu_{sL}^{\left(k\right)}=\ln B_{L}^{\left(k\right)}-\alpha_{L}^{\left(k\right)}\ln r_{j}^{\left(k\right)}$
and $\left(\sigma_{sL}^{\left(k\right)}\right)^{2}=\left(\beta\sigma_{L}^{\left(k\right)}\right)^{2}$.
Plugging (\ref{eq:P_NLOS,r}) and (\ref{eq:f_P_LOS,r}) into (\ref{eq:F_P_1}) , the distribution of received power $P_{j}^{\left(k\right)}$,
$j\neq m$, i.e., $F_{P_{j}^{\left(k\right)}}\left(x\left|R_{j}^{\left(k\right)}=r_{j}^{\left(k\right)}\right.\right)$, can be obtained:
\begin{align}
 & \quad F_{P_{j}^{\left(k\right)}}\left(x\left|R_{j}^{\left(k\right)}=r_{j}^{\left(k\right)}\right.\right)=\frac{1}{2}\Bigg[1+\left(1-e^{-\kappa r_{j}^{\left(k\right)}}\right)\nonumber \\
 & \textrm{erf}\left(\frac{\ln x-\mu_{sN}^{\left(k\right)}}{\sqrt{2}\sigma_{sN}^{\left(k\right)}}\right)+e^{-\kappa r_{j}^{\left(k\right)}}\textrm{erf}\left(\frac{\ln x-\mu_{sL}^{\left(k\right)}}{\sqrt{2}\sigma_{sL}^{\left(k\right)}}\right)\Bigg],\label{eq:F_P_2}
\end{align}
where $\textrm{erf}\left(\cdot\right)$ is the error function.

The PDF of the received power given the distance $R_{j}^{\left(k\right)}=r_{j}^{\left(k\right)}$
is obtained by taking the derivative of $F_{P_{j}^{\left(k\right)}}\left(x\left|R_{j}^{\left(k\right)}=r_{j}^{\left(k\right)}\right.\right)$
with respect to $x$:
\begin{align}
 & f_{P_{j}^{\left(k\right)}}\left(x\left|R_{j}^{\left(k\right)}=r_{j}^{\left(k\right)}\right.\right)=\frac{e^{-\kappa r_{j}^{\left(k\right)}}}{x\sigma_{sL}^{\left(k\right)}\sqrt{2\pi}}\exp\left[\frac{-\left(\ln x-\mu_{sL}^{\left(k\right)}\right)^{2}}{2\left(\sigma_{sL}^{\left(k\right)}\right)^{2}}\right]\nonumber \\
 & +\frac{\left(1-e^{-\kappa r_{j}^{\left(k\right)}}\right)}{x\sigma_{sN}^{\left(k\right)}\sqrt{2\pi}}\exp\left[\frac{-\left(\ln x-\mu_{sN}^{\left(k\right)}\right)^{2}}{2\left(\sigma_{sN}^{\left(k\right)}\right)^{2}}\right].
\end{align}

Conditioned on $\underset{j\in\left[1,n\right],j\neq m}{\bigcap}P_{j}^{\left(k\right)}\leq t$,
$P_{m}^{\left(k\right)}=t,$ and $\left\{ R_{i}^{\left(k\right)}=r_{i}^{\left(k\right)}\text{, i\ensuremath{\in\left[1,n\right]} }\right\} $, the PDF of $P_{j}^{\left(k\right)}$ is then given
by
\begin{align}
 & \quad f_{P_{j}^{\left(k\right)}}\left(x\left|P_{j}^{\left(k\right)}\leq t,P_{m}^{\left(k\right)}=t,R_{j}^{\left(k\right)}=r_{j}^{\left(k\right)}\right.\right)\nonumber \\
 & =\frac{f_{P_{j}^{\left(k\right)}}\left(x\left|R_{j}^{\left(k\right)}=r_{j}^{\left(k\right)}\right.\right)}{\int_{0}^{t}f_{P_{j}^{\left(k\right)}}\left(x\left|R_{j}^{\left(k\right)}=r_{j}^{\left(k\right)}\right.\right)\textrm{d}x}\nonumber \\
 & =2\Bigg\{\left(1-e^{-\kappa r_{j}^{\left(k\right)}}\right)\textrm{erfc}\left(-\frac{\ln t-\mu_{sN}^{\left(k\right)}}{\sqrt{2}\sigma_{sN}^{\left(k\right)}}\right)+e^{-\kappa r_{j}^{\left(k\right)}}\cdot\nonumber \\
 & \textrm{erfc}\left(-\frac{\ln t-\mu_{sL}^{\left(k\right)}}{\sqrt{2}\sigma_{sL}^{\left(k\right)}}\right)\Bigg\}^{-1}\Bigg\{\frac{e^{-\kappa r_{j}^{\left(k\right)}}}{x\sigma_{sL}^{\left(k\right)}\sqrt{2\pi}}\exp\left[\frac{-\left(\ln x-\mu_{sL}^{\left(k\right)}\right)^{2}}{2\left(\sigma_{sL}^{\left(k\right)}\right)^{2}}\right]\nonumber \\
 & +\frac{\left(1-e^{-\kappa r_{j}^{\left(k\right)}}\right)}{x\sigma_{sN}^{\left(k\right)}\sqrt{2\pi}}\exp\left[\frac{-\left(\ln x-\mu_{sN}^{\left(k\right)}\right)^{2}}{2\left(\sigma_{sN}^{\left(k\right)}\right)^{2}}\right]\Bigg\},0<x\leq t.
\end{align}
where $\textrm{erfc}\left(x\right)=1-\textrm{erf}\left(x\right)$
is the complementary error function.

Thus the LT of $P_{j}^{\left(k\right)}$ conditioned on $\underset{j\in\left[1,n\right],j\neq m}{\bigcap}P_{j}^{\left(k\right)}\leq t$,
$P_{m}^{\left(k\right)}=t,$ and $\left\{ R_{i}^{\left(k\right)}=r_{i}^{\left(k\right)}\text{, i\ensuremath{\in\left[1,n\right]} }\right\} $ is derived by using its definition
% Preview source code for paragraph 0

\begin{align}
 & \quad\mathcal{L}_{P_{j}^{\left(k\right)}}\left(s\right)\nonumber \\
 & =\int_{0}^{t}e^{-sx}f_{P_{j}^{\left(k\right)}}\left(x\left|P_{j}^{\left(k\right)}\leq t,P_{m}^{\left(k\right)}=t,R_{j}^{\left(k\right)}=r_{j}^{\left(k\right)}\right.\right)dx.\label{eq:L_P}
\end{align}

%Closed-form expressions of the LT of log-normal distribution do
%not exist \cite{Asmussen14On}, which leads to (\ref{eq:L_P}) unclosed.
\begin{lemma}
  If conditioned on $\left\{ R_{i}^{\left(k\right)}=r_{i}^{\left(k\right)}\text{, i\ensuremath{\in\left[1,n\right]} }\right\} $, $P_{j}^{\left(k\right)}$, $j<n$ are independent of each other.
\end {lemma}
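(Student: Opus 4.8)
The plan is to reduce the claimed independence to the independent-marking property of the underlying marked Poisson point process. The guiding observation is that, once the distances are fixed, the only residual randomness in each received power is carried by the per-link shadowing and blockage variables attached to that specific base station, and these vary independently from one base station to another.

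First I would make the functional dependence explicit. Combining the received-power model in (\ref{eq:rec_power}) with the path-loss expressions, conditioned on $R_j^{(k)}=r_j^{(k)}$ one has
\[
P_j^{(k)}=\mathbb{I}_j^{(k)}B_N^{(k)}\big(r_j^{(k)}\big)^{-\alpha_N^{(k)}}e^{\beta\xi_N^{(k)}}+\big(1-\mathbb{I}_j^{(k)}\big)B_L^{(k)}\big(r_j^{(k)}\big)^{-\alpha_L^{(k)}}e^{\beta\xi_L^{(k)}},
\]
so that $P_j^{(k)}$ is a deterministic (measurable) function of the triple $\big(\mathbb{I}_j^{(k)},\xi_N^{(k)},\xi_L^{(k)}\big)$ associated with the $j$-th base station alone, with $r_j^{(k)}$ entering only as a fixed constant once the distance is conditioned upon.

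Second, I would invoke the modelling assumptions to argue that these triples are mutually independent across $j\in[1,n]$. The shadowing variables $\xi_N^{(k)}$ and $\xi_L^{(k)}$ are introduced as independent Gaussian random variables across distinct links, and the blockage model assigns the LOS/NLOS status of each link by an independent Bernoulli trial whose success probability $p_N(r_j^{(k)})=1-e^{-\kappa r_j^{(k)}}$ depends only on that link's own distance. Formally, these are the marks of an independently marked PPP: conditioning on the distance vector fixes the point locations but, by the independent-marking property, leaves the marks mutually independent and independent of the locations. Hence $\{(\mathbb{I}_j^{(k)},\xi_N^{(k)},\xi_L^{(k)})\}_{j\in[1,n]}$ is a family of mutually independent random objects. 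Since disjoint measurable functions of independent families are themselves independent, and each $P_j^{(k)}$ depends only on the $j$-th triple, the powers $\{P_j^{(k)}\}_{j\in[1,n]}$ are mutually independent given the distances. This is precisely what licenses factoring the Laplace transform of $I_{m,\leq n}^{(k)}=\sum_{j\neq m}P_j^{(k)}$ into a product of per-base-station transforms built from (\ref{eq:L_P}).

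I expect the only genuinely delicate point to be the passage from ``PPP of base stations'' to ``independent marks given distances'': one must ensure that conditioning on the ordered distances does not covertly correlate the shadowing or the LOS/NLOS indicators. This is handled by appealing to the independent-marking theorem for marked PPPs, under which the marks remain independent after the locations are revealed. A secondary remark, which sits just outside the proof proper, is that the truncation $P_j^{(k)}\leq t$ used in the subsequent derivation preserves this independence, since it conditions each power on an event defined through that power alone.
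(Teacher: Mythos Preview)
Your proposal is correct and follows essentially the same approach as the paper: once the distances are fixed, each $P_j^{(k)}$ is a measurable function of the per-link triple $(\mathbb{I}_j^{(k)},\xi_N^{(k)},\xi_L^{(k)})$, and these triples are independent across $j$, so the received powers inherit independence. The paper's proof is a two-sentence version of exactly this argument; your addition of the independent-marking theorem to justify that conditioning on the ordered distances does not correlate the marks is a welcome rigor upgrade rather than a different strategy.
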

\begin{proof}
   From (\ref{eq:rec_power}), we find that if conditioned on $\left\{ R_{i}^{\left(k\right)}=r_{i}^{\left(k\right)}\right\} $, $P_{j}^{\left(k\right)}$ are determined by random variables $\mathbb{I}_{j}^{\left(k\right)}$, $\xi_{N}^{\left(k\right)}$ and $\xi_{L}^{\left(k\right)}$ which are all independent of each other. Thus $P_{j}^{\left(k\right)}$ are independent of each other, which completes the proof.
\end{proof}
By using Lemma 2, the moment generating function (MGF) of $I_{m,\leq n}^{\left(k\right)}={\sum_{j=1,j\neq m}^{n}}P_{j}^{\left(k\right)}$
is the product of the LTs of individuals
\begin{align}
 & \quad\mathcal{L}_{I_{m,\leq n}^{\left(k\right)}}\left(s\right)={\prod_{j=1,j\neq m}^{n}}\mathcal{L}_{P_{j}^{\left(k\right)}}\left(s\right)
\end{align}

\subsubsection{The LT of $I_{m,>n}^{\left(k\right)}$}

Next, the LT of $I_{m,>n}^{\left(k\right)}={\sum_{j=n+1}^{\infty}}P_{j}^{\left(k\right)}$ will be derived
conditioned on $\underset{j\in\left[1,n\right],j\neq m}{\bigcap}P_{j}^{\left(k\right)}\leq t$,
$P_{m}^{\left(k\right)}=t,$ and $\left\{ R_{i}^{\left(k\right)}=r_{i}^{\left(k\right)}\right\} $,
$i\in\left[1,n\right]$. Note that if conditioned on $\left\{ R_{i}^{\left(k\right)}=r_{i}^{\left(k\right)}\right\} $ only,
$P_{j}^{\left(k\right)}$, $j>n$ are independent of each other and also independent of $P_{j}^{\left(k\right)}$, $j\leq n$ as well because distance $R_{l}^{\left(k\right)}$, $l>n$ are not sequential by index as $R_{l}^{\left(k\right)}$,
$l\leq n$. For simplification, assume that the transmission of interference from BSs located outside the disk area with radius $R_{n}^{\left(k\right)}$ are all NLOS. This simplification can be justified by that BSs located far away are more likely to have NLOS paths. The LT of $I_{m,>n}^{\left(k\right)}$ is obtained as follows
\begin{align}
 & \quad\mathcal{L}_{I_{m,>n}^{\left(k\right)}}\left(s\right)=\mathbb{E}_{I_{m,>n}^{\left(k\right)}}\left(e^{-sI_{m,>n}^{\left(k\right)}}\right)\nonumber \\
 & =\mathbb{E}_{\Phi_{k},\xi_{N}^{\left(k\right)}}\left[\exp\left(-s\underset{l\in\Phi_{k}\left\backslash \Theta_{n}^{\left(k\right)}\right.}{\sum}B_{N}^{\left(k\right)}\left(r_{l}^{\left(k\right)}\right)^{-\alpha_{N}^{\left(k\right)}}e^{\beta\xi_{N}^{\left(k\right)}}\right)\right]\nonumber \\
 & =\mathbb{E}_{\Phi_{k}}\left\{ \underset{l\in\Phi_{k}\left\backslash \Theta_{n}^{\left(k\right)}\right.}{\prod}\mathbb{E}_{\xi_{N}^{\left(k\right)}}\exp\left[-sB_{N}^{\left(k\right)}\left(r_{l}^{\left(k\right)}\right)^{-\alpha_{N}^{\left(k\right)}}e^{\beta\xi_{N}^{\left(k\right)}}\right]\right\} \nonumber \\
 & \stackrel{\left(a\right)}{=}\exp\left\{ -2\pi\lambda_{k}\int_{r_{n}^{\left(k\right)}}^{\infty}\left[1-\varphi\left(v,s\right)\right]v\textrm{d}v\right\} ,\label{eq:LT_Im_gre}
\end{align}
with
\begin{align}
 & \varphi\left(v,s\right)=\nonumber \\
 & \int_{-\infty}^{\infty}\frac{1}{\sigma_{N}^{\left(k\right)}\sqrt{2\pi}}\exp\left[-sB_{N}^{\left(k\right)}v^{-\alpha_{N}^{\left(k\right)}}e^{\beta u}-\frac{u^{2}}{2\left(\sigma_{N}^{\left(k\right)}\right)^{2}}\right]\textrm{d}u.\label{eq:phi}
\end{align}
where $\Theta_{n}^{\left(k\right)}$ denotes the set of the locations of BS $\left(k,i\right)$,
$,i\in\left[1,n\right]$, $\left(a\right)$ follows from the probability
generating functional (PGFL) of the PPP \cite{Stoyan96Stochastic}.
%Plugging (\ref{eq:phi}) into (\ref{eq:LT_Im_gre}),
%$\mathcal{L}_{I_{m,>n}^{\left(k\right)}}\left(s\right)$ is derived
%as follows
%\begin{align}
% & \mathcal{L}_{I_{m,>n}^{\left(k\right)}}\left(s\right)=\nonumber \\
% & \exp\left\{ -2\pi\lambda_{k}{\sum_{q=1}^{\infty}}\frac{\left(-sB_{N}^{\left(k\right)}\right)^{q}\left(r_{n}^{\left(k\right)}\right)^{2-q\alpha_{N}^{\left(k\right)}}}{q!\left(q\alpha_{N}^{\left(k\right)}-1\right)}e^{\left(q\beta\sigma_{N}^{\left(k\right)}/\sqrt{2}\right)^{2}}\right\} .
%\end{align}

At last, the PDF of $I_{m}^{\left(k\right)}$ is obtained by taking an
inverse LT of $\mathcal{L}_{I_{m}^{\left(k\right)}}\left(s\right)=\mathcal{L}_{I_{m,\leq n}^{\left(k\right)}}\left(s\right)\mathcal{L}_{I_{m,>n}^{\left(k\right)}}\left(s\right)$
, i.e.,
\begin{equation}
f_{I_{m}^{\left(k\right)}}\left(x\right)=\mathcal{L}_{I_{m}^{\left(k\right)}}^{-1}\left(s\right).
\end{equation}

Through derivations above, we get
\begin{align}
 & F_{I_{m}^{\left(k\right)}}\left(\left.\frac{t}{\gamma}\right|\underset{j\in\left[1,n\right],j\neq m}{\bigcap}P_{j}^{\left(k\right)}\leq t,P_{m}^{\left(k\right)}=t,\left\{ R_{i}^{\left(k\right)}=r_{i}^{\left(k\right)}\right\} \right)\nonumber \\
 & =\int_{0}^{\frac{t}{\gamma}}f_{I_{m}^{\left(k\right)}}\left(x\right)\textrm{d}x
\end{align}

\subsection{The Per Tier Coverage Probability and Coverage Probability}
\label{3-B}
The per tier coverage probability can be derived by de-conditioning with respect to $\underset{j\in\left[1,n\right],j\neq m}{\bigcap}P_{j}^{\left(k\right)}\leq t$,
$P_{m}^{\left(k\right)}=t$ and $\left\{ R_{i}^{\left(k\right)}=r_{i}^{\left(k\right)}\right\} $, respectively.

Firstly, we need to de-condition with respect to $\underset{j\in\left[1,n\right],j\neq m}{\bigcap}P_{j}^{\left(k\right)}\leq t$.
Noticing the conditional independence of $P_{j}^{\left(k\right)}, j\neq m$ (conditioned on their respective distances),  the probability of $\underset{j\in\left[1,n\right],j\neq m}{\bigcap}P_{j}^{\left(k\right)}\leq t$
conditioning on $P_{m}^{\left(k\right)}=t$ and $\left\{ R_{i}^{\left(k\right)}=r_{i}^{\left(k\right)}\text{, i\ensuremath{\in\left[1,n\right]} }\right\} $ is given by
\begin{align}
 & \quad\Pr\left(\left.\underset{j\in\left[1,n\right],j\neq m}{\bigcap}P_{j}^{\left(k\right)}\leq t\right|P_{m}^{\left(k\right)}=t,\left\{ R_{i}^{\left(k\right)}=r_{i}^{\left(k\right)}\right\} \right)\nonumber \\
 & ={\prod_{j=1,j\neq m}^{n}}\Pr\left(\left.P_{j}^{\left(k\right)}\leq t\right|R_{j}^{\left(k\right)}=r_{j}^{\left(k\right)}\right)\nonumber \\
 & ={\prod_{j=1,j\neq m}^{n}}F_{P_{j}^{\left(k\right)}}\left(t\left|R_{j}^{\left(k\right)}=r_{j}^{\left(k\right)}\right.\right).
\end{align}
De-conditioning with respect to $\underset{j\in\left[1,n\right],j\neq m}{\bigcap}P_{j}^{\left(k\right)}\leq t$,
we obtain
\begin{align}
 & F_{I_{m}^{\left(k\right)}}\left(\left.\frac{t}{\gamma}\right|P_{m}^{\left(k\right)}=t,\left\{ R_{i}^{\left(k\right)}=r_{i}^{\left(k\right)}\right\} \right)\nonumber \\
 & =\int_{0}^{\frac{t}{\gamma}}f_{I_{m}^{\left(k\right)}}\left(x\right)\textrm{d}x{\prod_{j=1,j\neq m}^{n}}F_{P_{j}^{\left(k\right)}}\left(t\left|R_{j}^{\left(k\right)}=r_{j}^{\left(k\right)}\right.\right).
\end{align}

Next, we obtain $F_{I_{m}^{\left(k\right)}}\left(\left.\frac{t}{\gamma}\right|\left\{ R_{i}^{\left(k\right)}=r_{i}^{\left(k\right)}\right\} \right)$
by de-conditioning with respect to $P_{m}^{\left(k\right)}=t$, which is derived as
follows
\begin{align}
 & F_{I_{m}^{\left(k\right)}}\left(\left.\frac{t}{\gamma}\right|\left\{ R_{i}^{\left(k\right)}=r_{i}^{\left(k\right)}\right\} \right)\nonumber \\
 & =\int_{0}^{\infty}\Big \{\int_{0}^{\frac{t}{\gamma}}f_{I_{m}^{\left(k\right)}}\left(x\right)\textrm{d}x{\prod_{j=1,j\neq m}^{n}}F_{P_{j}^{\left(k\right)}}\left(t\left|R_{j}^{\left(k\right)}=r_{j}^{\left(k\right)}\right.\right)\Big \}\cdot\nonumber \\
 & \quad f_{P_{m}^{\left(k\right)}}\left(t\left|R_{m}^{\left(k\right)}=r_{m}^{\left(k\right)}\right.\right)\textrm{d}t. \label{eq:FI_R}
\end{align}

As a final step, we shall de-condition with respect to $\left\{ R_{i}^{\left(k\right)}=r_{i}^{\left(k\right)}\right\} $
and the joint PDF of $\left\{ R_{i}^{\left(k\right)}=r_{i}^{\left(k\right)}\right\} $
is derived as follows.
\begin{lemma}
  The joint distance distribution up to $n$-th nearest neighbors in the $k$-th tier which is given by \cite{Moltchanov12Distance}
\begin{align}
 & \quad f_{k}\left(r_{1},r_{2},\ldots,r_{n}\right)\nonumber \\
 & =\begin{cases}
\left(2\pi\lambda_{k}\right)^{n}r_{1}r_{2}\ldots r_{n}e^{-\pi\lambda_{k}r_{n}^{2}}, & 0\leq r_{1}\leq r_{2}\leq\ldots\leq r_{n}\\
0, & otherwise
\end{cases} \label{eq:fR}
\end{align}
where $r_{i}$, $i\in\left[1,n\right]$ is  the $i$-th nearest distance
to the typical MU%
\footnote{We omit tier order $k$ in this subsections for notation simplification.%
}.
\end {lemma}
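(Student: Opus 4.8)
The plan is to exploit two defining properties of the homogeneous PPP $\Phi_k$: that the number of points falling in any bounded region $A$ is Poisson-distributed with mean $\lambda_k \left|A\right|$ (where $\left|A\right|$ denotes area), and that the point counts in disjoint regions are mutually independent. Working in polar coordinates centred at the typical MU at the origin, I would characterise the ordered distances $R_1 \le R_2 \le \cdots \le R_n$ by the infinitesimal-probability method: compute $\Pr\left(R_i \in \left[r_i, r_i + \textrm{d}r_i\right),\, i\in\left[1,n\right]\right)$ for a fixed ordered tuple $0 \le r_1 \le r_2 \le \cdots \le r_n$ and read off the joint density as the coefficient of $\textrm{d}r_1 \cdots \textrm{d}r_n$.

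First I would decompose the plane into the nested annuli determined by the radii $r_1, \ldots, r_n$. For the $i$-th nearest point to sit at distance $r_i$, the event in question requires exactly one point of $\Phi_k$ in each infinitesimal annulus $\left\{ r_i \le \rho < r_i + \textrm{d}r_i \right\}$ and no points in the gaps $\left\{ r_{i-1} \le \rho < r_i \right\}$ (with $r_0 = 0$). Each infinitesimal annulus has area $2\pi r_i\,\textrm{d}r_i$ to first order, so by the Poisson law the probability of exactly one point there is $2\pi \lambda_k r_i\,\textrm{d}r_i + o\left(\textrm{d}r_i\right)$, while the void probability of a region of area $a$ is $e^{-\lambda_k a}$.

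Next I would multiply these factors, invoking the independence of counts over the disjoint annuli. The empty-region contributions combine into a single exponential whose exponent is $-\lambda_k$ times the total emptied area $\pi r_1^2 + \pi\left(r_2^2 - r_1^2\right) + \cdots + \pi\left(r_n^2 - r_{n-1}^2\right)$; this telescopes to $\pi r_n^2$, yielding the factor $e^{-\pi \lambda_k r_n^2}$. Collecting the $n$ single-point factors produces $\prod_{i=1}^{n} 2\pi \lambda_k r_i$, and dividing out $\textrm{d}r_1 \cdots \textrm{d}r_n$ gives the claimed density $\left(2\pi\lambda_k\right)^n r_1 r_2 \cdots r_n\, e^{-\pi\lambda_k r_n^2}$ on the ordered simplex, and zero otherwise, since no valid tuple can violate $r_1 \le \cdots \le r_n$.

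The main technical point to handle carefully is the asymptotic bookkeeping of the infinitesimal annuli: I must verify that retaining only the leading $2\pi\lambda_k r_i\,\textrm{d}r_i$ term in each single-point probability (discarding both the $e^{-2\pi\lambda_k r_i \textrm{d}r_i}$ correction and the probabilities of two or more points) introduces only $o\left(\textrm{d}r_1\cdots\textrm{d}r_n\right)$ error, so that the product genuinely reproduces the density in the limit. An alternative and arguably cleaner route would be the radial mapping $t = \pi\lambda_k \rho^2$, under which $\Phi_k$ induces a unit-rate Poisson process on $\left[0,\infty\right)$ whose first $n$ ordered arrival times have joint density $e^{-t_n}$ on $0 \le t_1 \le \cdots \le t_n$; transforming back via $t_i = \pi\lambda_k r_i^2$, with Jacobian $\prod_{i=1}^{n} 2\pi\lambda_k r_i$, reproduces the stated expression at once and sidesteps the infinitesimal-limit argument entirely.
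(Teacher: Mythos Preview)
Your proposal is correct. The paper does not actually prove this lemma; it simply states the result and cites \cite{Moltchanov12Distance} as its source, so your derivation supplies what the paper omits. Both of your routes --- the infinitesimal-annulus decomposition and the mapping $t=\pi\lambda_k\rho^2$ to a unit-rate Poisson process on $[0,\infty)$ --- are standard and sound, with the second being cleaner as you note since it avoids the $o(\textrm{d}r_1\cdots\textrm{d}r_n)$ bookkeeping.
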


Combing equations (\ref{eq:FI_R}) and (\ref{eq:fR}), the unconditional probability $F_{I_{m}^{\left(k\right)}}\left(\frac{t}{\gamma}\right)$ can be obtained by de-conditioning with
respect to $\left\{ R_{i}^{\left(k\right)}=r_{i}^{\left(k\right)}\right\} $ as follows
\begin{align}
 & F_{I_{m}^{\left(k\right)}}\left(\frac{t}{\gamma}\right)\nonumber \\
 & =\underset{0\leq r_{1}^{\left(k\right)}\leq\ldots\leq r_{n}^{\left(k\right)}}{\idotsint}\int_{0}^{\infty}\Big \{\int_{0}^{\frac{t}{\gamma}}f_{I_{m}^{\left(k\right)}}\left(x\right)\textrm{d}x\nonumber \\
 & {\prod_{j=1,j\neq m}^{n}}F_{P_{j}^{\left(k\right)}}\left(t\left|R_{j}^{\left(k\right)}=r_{j}^{\left(k\right)}\right.\right)\Big \}\cdot f_{P_{m}^{\left(k\right)}}\left(t\left|R_{m}^{\left(k\right)}=r_{m}^{\left(k\right)}\right.\right)\textrm{d}t\cdot\nonumber \\
 & \left(2\pi\lambda_{k}\right)^{n}r_{1}^{\left(k\right)}r_{2}^{\left(k\right)}\ldots r_{n}^{\left(k\right)}e^{-\pi\lambda_{k}\left(r_{n}^{\left(k\right)}\right)^{2}}\textrm{d}r_{1}^{\left(k\right)}\textrm{d}r_{2}^{\left(k\right)}\ldots\textrm{d}r_{n}^{\left(k\right)}.
\end{align}
The per tier coverage probability is the summation of all possibilities
as follows
\begin{equation}
\boldsymbol{\textrm{P}_{c}^{\left(k\right)}}\left(\gamma\right)={\sum_{m=1}^{n}} F_{I_{m}^{\left(k\right)}}\left(\frac{t}{\gamma}\right).
\end{equation}
And the coverage probability of the whole tiers is given by
\begin{align}
 & \boldsymbol{\textrm{P}_{c}}\left(\gamma\right)=1-{\prod_{k=1}^{K}}\left[1-\boldsymbol{\textrm{P}_{c}^{\left(k\right)}}\left(\gamma\right)\right]\nonumber \\
 & =1-{\prod_{k=1}^{K}}\left[1-{\sum_{m=1}^{n}} F_{I_{m}^{\left(k\right)}}\left(\frac{t}{\gamma}\right)\right].
\end{align}

%\begin{theorem}
%The per tier coverage probability which a typical MU located at the origin $o$ is covered by the candidate BSs in the $k$-th tier network is
%\end{theorem}
%Proof: see Appendix \ref{app-A}.

\section{Performance Analysis and Results}
\label{sec4}
This section presents results of previous sections, followed by discussions. \cite{Anderson14Fixed} suggest that usually $\alpha_{N}^{\left(k\right)}>\alpha_{L}^{\left(k\right)}$ and $\sigma_{N}^{\left(k\right)}>\sigma_{L}^{\left(k\right)}$ if we fix antenna types and heights. Let $\alpha_{N}^{\left(1\right)} = 4.28$, $\alpha_{L}^{\left(1\right)} = 2.42$, $\alpha_{N}^{\left(2\right)} = 3.75$ and $\alpha_{L}^{\left(2\right)} = 2.09$ in a 2-tier network. $A_{N}^{\left(1\right)}$, $A_{L}^{\left(1\right)}$, $A_{N}^{\left(2\right)}$ and $A_{L}^{\left(2\right)}$ are set to 2.7, 30.8, 32.9 and 41.4, respectively.

\begin{figure}
  \centering
  % Requires \usepackage{graphicx}
  \includegraphics[width=7.2cm,height=6cm]{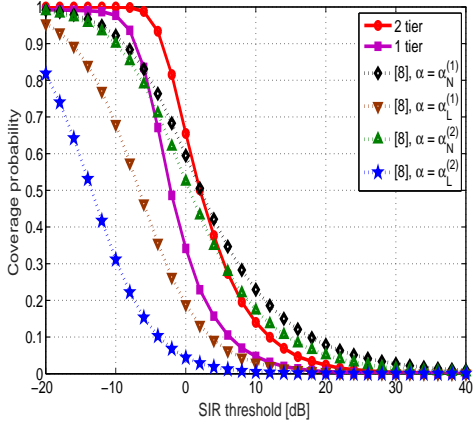}\\
  \caption{\small Coverage probability vs. SIR threshold in a 2-tier network ($K = 2$, $n = 5$, $P_{t}^{\left(1\right)} = 47\textrm{dBm}$, $P_{t}^{\left(2\right)} = 33\textrm{dBm}$).}\label{Fig2}
\end{figure}

\begin{figure}
  \centering
  % Requires \usepackage{graphicx}
  \includegraphics[width=7.2cm,height=6cm]{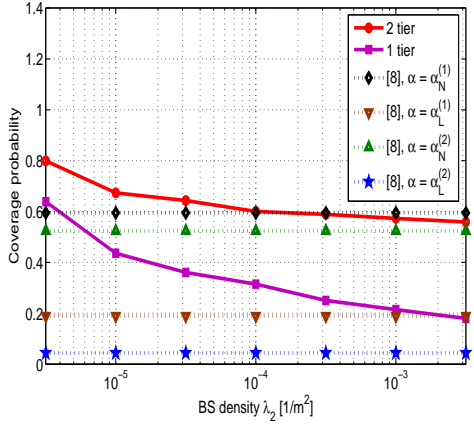}\\
  \caption{\small Coverage probability vs. 2nd tier BS density $\lambda_{2}$ in a 2-tier network ($K = 2$, $n = 2$, $P_{t}^{\left(1\right)} = 47\textrm{dBm}$, $P_{t}^{\left(2\right)} = 33\textrm{dBm}$).}\label{Fig3}
\end{figure}

Fig.~\ref{Fig2} shows the coverage probability with respect to the SIR threshold which varies from -20 dB to 40 dB. It is found that the coverage probability decreases with the increase of the SIR threshold, as the higher the SIR threshold, the more difficult for the received SIR at a MU to be higher than the SIR threshold. When the SIR threshold is fixed, 2-tier networks perform better than 1-tier networks. Besides, a comparison with \cite{Andrew11A} which does not consider NLOS and LOS transmissions is illustrated in the same figure. In \cite{Andrew11A}, the coverage probability is a monotonously increasing along with the path loss exponent $\alpha$. In our model, the coverage probability of both 2-tier and 1-tier networks have a similar trend with networks configured with NLOS path loss exponent, i.e., $\alpha_{N}^{\left(1\right)}$ or $\alpha_{N}^{\left(2\right)}$, which indicates that buildings and trees have a non-negligible impact on network performance.

The coverage probability vs. 2nd BS density $\lambda_{2}$ is given by Fig.~\ref{Fig3}. Through curves above, the coverage probability decreases with a slower and slower rate as 2nd BS density increases. While in \cite{Andrew11A}, the coverage probability is only a function of SIR threshold and path loss exponent if we ignore terminal noise.

Comparing Fig.~\ref{Fig2} with Fig.~\ref{Fig3}, we find that in urban areas \emph{dense BS deployment} do not always provide a better network performance. As Fig.~\ref{Fig2} shows, coverage probability in 2-tier networks with dense BSs is larger than that in 1-tier networks when the SIR threshold is fixed. While in Fig.~\ref{Fig3}, dense BSs  deployment weakens networks performance, which indicates that deploying more BSs in different tiers is better than deploying all BSs in the same tier in terms of coverage probability in urban areas. This is an effect caused by the co-existence of NLOS and LOS transmission in our model.
%\newpage
\section{Conclusions}
\label{sec5}
In this paper, we propose a heterogeneous network model considering LOS and NLOS transmission to study the coverage performance. The coverage probability is derived and analyzed with the assumption that both visible and invisible BSs are available for a MU, as long as the SIR threshold is satisfied. We also compare our work with \cite{Andrew11A} and obtain some interesting observations. As for our future work, channel
model shall be generalized and impacts of the number of
candidate BSs per tier should also be investigated.
%201

\section*{Acknowledgment}

The authors would like to acknowledgement from the International Science and Technology Cooperation Program of China (Grant No. 2015DFG12580 and 2014DFA11640), the National Natural Science Foundation of China (NSFC) (Grant No. 61471180 and 61210002) and the Fundamental Research Funds for the Central Universities (HUST Grant No. 2015XJGH011 and 2015MS038). Guoqiang Mao's research is supported by Australian Research Council (ARC) Discovery projects DP110100538 and DP120102030 and NSFC (Grant No. 61428102).

\end{document}